\newcommand{\tf}{\widetilde{f}}
\newcommand{\cR}{\mathcal{R}}
\newcommand{\Rtt}{\cR^{\mathrm{true}}}
\newcommand{\Rbt}{\cR^{\mathrm{c.i.}}}
\newcommand{\Rbg}{\cR^{\star}}
\newcommand{\Rbu}{\cR^{\mathrm{up}}}
\newcommand{\R}{\mathbb{R}}
\newcommand{\N}{\mathcal{N}}
\renewcommand{\P}{\mathbb{P}_{\mu}}
\newcommand{\C}{\mathcal{L}}
\newcommand{\E}{\mathbb{E}}
\newcommand{\Var}{\mathrm{Var}}
\newtheorem{thm}{Theorem}[section]
\newtheorem{lem}{Lemma}[section]
\newtheorem{prop}{Proposition}[section]
\begin{document}

\title{Line failure probability bounds for power grids}

\author{\IEEEauthorblockN{Tommaso Nesti}
\IEEEauthorblockA{CWI, Amsterdam\\
\texttt{nesti@cwi.nl}}
\and
\IEEEauthorblockN{Alessandro Zocca}
\IEEEauthorblockA{CWI, Amsterdam\\
\texttt{zocca@cwi.nl}}
\and
\IEEEauthorblockN{Bert Zwart}
\IEEEauthorblockA{CWI, Amsterdam, and\\
Eindhoven University of Technology\\
\texttt{bert.zwart@cwi.nl}}}

\maketitle

\begin{abstract}
We develop upper bounds for line failure probabilities in power grids, under the DC approximation and assuming Gaussian noise for the power injections. Our upper bounds are explicit, and lead to characterization of safe operational capacity regions that are convex and polyhedral, making our tools compatible with existing planning methods. Our probabilistic bounds are derived through the use of powerful concentration inequalities.
\end{abstract}

\IEEEpeerreviewmaketitle
\section{Introduction}
\label{sec1}
Electrical power grids are expected to be reliable at all times. The rise of intermittent renewable generation is making this expectation challenging to live up to.  Power imbalances caused by generation intermittency may cause grid stability constraints to be violated:~80\% of the bottlenecks in the European high-voltage grid was already caused by renewables in 2015~\cite{Yang2015b}. A well-controlled power grid matches supply and demand, ensuring that line constraints are not violated. System operators achieve this by making periodic control actions that adapt the operating point of the grid in response to changing conditions~\cite{Ela2011}.

Due to the impact of renewables, a planning that accounts for worst-case behavior may lead to overly conservative solutions. A more realistic paradigm is to make a planning admissible when the probability that line power flows exceed a threshold is sufficiently small. This has motivated several recent works that attempt to evaluate line failure probabilities using rare event simulation techniques~\cite{Wadman2012,Wadman2013,Shortle2013}, as well as large deviations techniques~\cite{Nesti2016}. Simulation techniques can lead to accurate estimates, but may be too time-consuming to use as subroutine within an optimization package that has to determine a planning that is operational during the next 5 to 15 minutes, such as optimal power flow (OPF). Recent papers studying chance-constrained versions of OPF problems include~\cite{Bienstock2014,Summers2014}. Large deviations techniques are appealing, but rely on a scaling procedure, essentially assuming that the noise during the next planning period is small.

This article makes a new contribution to this emerging area by deriving approximations for line failure probabilities that are \textit{guaranteed to be conservative}. That is, keeping the approximation smaller ensures that reliability constraints are actually met. In addition, these new approximations are explicit enough to be used for optimization purposes on short time scales. In particular, we develop two such approximations in Section~\ref{sec3}. Both bounds lead to an approximation of the capacity region that is conservative, convex and polyhedral, making our results compatible with existing planning methods like OPF~\cite{Bienstock2014,Summers2014}.

This paper is organized as follows. In Section~\ref{sec2} we provide a detailed problem formulation. We model stochastic power injections into the network by means of Gaussian random variables, describe line power flows through the well-known DC approximation, and define the failure probabilities of interest. Our main results are two different upper bounds that we present in Section~\ref{sec3}. The first upper bound is explicit, while the second one is sharper and explicit up to a finite-step minimization procedure. These bounds are compared numerically with the exact safe capacity regions in Section~\ref{sec4}. Section~\ref{sec6} provides the proofs of the results in Section~\ref{sec3}. Concluding remarks are provided in Section~\ref{sec5}.


\section{Problem formulation}
\label{sec2}

\subsection{Network description and DC approximation}
\label{sub:network}
We model the power grid network as a connected graph $G=G(V,E)$, where $V$ denotes the set of \textit{buses} and $E$ the set of directed edges modeling the \textit{transmission lines}. $n=|V|$ is the number of buses and $m=|E|$ is the number of lines. $(i,j) \in E$ denotes the transmission line between buses $i$ and $j$ with \textit{susceptance} $\beta_{i,j}=\beta_{j,i}$. If there is no transmission line between $i$ and $j$ we set $\beta_{i,j}=\beta_{j,i}=0$.

As in~\cite{Cetinay2016,Zocca2016}, the network structure and susceptances are encoded in the \textit{weighted Laplacian matrix}
\[
	L_{i,j} :=\begin{cases}
		-\beta_{i,j}								&\text{if } i \neq j,\\
		\sum_{k\neq j} \beta_{i,k} 	&\text{if } i=j.
	\end{cases}
\]
Let $p\in\R^n$ denote the vector of (real) power injections, $\theta\in\R^n$ the vector of phase angles, and $\tf\in\R^m$ the vector of (real) power flow over the lines. We will use the convention that $p_i\ge 0$ ($p_i<0$) means that power is generated (consumed, respectively) at bus $i$.

We make use of the \textit{DC approximation}, which is commonly used in transmission system analysis~\cite{Purchala2005,Stott2009,Powell2004,Wood2012}. Thus, the real flow $\tf_{i,j}$ over line $(i,j)$ is related to the phase angle difference between buses $i$ and $j$ via the linear relation
\begin{equation}\label{eq:DC_scalar}
	\tf_{i,j}=\beta_{i,j}(\theta_i-\theta_j).
\end{equation}
We assume a balanced DC power flow, which means that the total net power injected in the network is zero, i.e.
\begin{equation}
\label{eq:zerosum}
	\mathbf{1}^Tp=0,
\end{equation}
where $\mathbf{1} \in \R^n$ is the vector with all entries equal to $1$. We enforce this constraint through the concept of \textit{slack bus}. 
Following the approach in~\cite{Cetinay2016}, and invoking assumption~\eqref{eq:zerosum}, the relation between  $\theta$ and $p$ can be written in matrix form as
\begin{equation}
\label{eq:dcapprox}
	\theta =L^+p,
\end{equation}
where $L^+$ is the \textit{Moore-Penrose} pseudo-inverse of the matrix $L$ and an average value of zero has been chosen as a reference for the node voltage phase angles. Choosing an arbitrary but fixed orientation of the transmission lines, the network structure is described by
the \textit{weighted edge-vertex incidence matrix} $B\in\R^{m\times n}$ whose components are
\[
 	B_{\ell, i}=\begin{cases}
    	\beta_{i,j}		&\text{if } \ell=(i,j),\\
    	-\beta_{i,j}		&\text{if } \ell=(j,i),\\
    	0    &\text{otherwise}.
    \end{cases}
 \]
Using such a matrix, we can rewrite identity~\eqref{eq:DC_scalar} as $\tf = B \theta$. Combining the latter equation and~\eqref{eq:dcapprox}, the line power flow $\tf$ can be written as a linear transformation of the power injections $p$, i.e.
\begin{equation}
\label{eq:tfBLp}
	\tf=BL^+ p.
\end{equation}
Transmission lines can fail due to overload. We say that a \textit{line overload} occurs in transmission line $\ell$ if $|\tf_\ell| > M_\ell$, where $M_\ell$ is the \textit{line capacity}. If this happens, the line may trip, causing a global redistribution of the line power flows which could trigger cascading failures and blackouts.
It is convenient to look at the \textit{normalized line power flow} vector $f \in \R^m$, defined component-wise as $f_\ell:= \tf_\ell / M_\ell$ for every $\ell=1,\dots,m$. The relation between line power flows and normalized power flows can be rewritten as $f = D \tf$, where $D  \in \R^{m \times m}$ is the diagonal matrix $D:=\mathrm{diag}(M_1^{-1}, \dots, M_m^{-1})$. In view of~\eqref{eq:tfBLp}, we have
\begin{equation}
\label{eq:fCp}
	f= C p,
\end{equation}
where $C:=DBL^+\in\R ^{m\times n}$. Henceforth, we refer to the normalized power flows simply as power flows, unless specified otherwise.


\subsection{Stochastic power injections and line power flows}
\label{sub:stochasticpower}

In this section we describe our model for the bus power injections. As our focus is on network reliability under uncertainty, we assume that each bus houses a \textit{stochastic power injection} or \textit{load}. This choice allows to model, for example, intermittent power generation by renewable sources or highly variable load.

In order to guarantee that network balance condition~\eqref{eq:zerosum} is satisfied even with stochastic inputs, we assume that bus $n$ is a \textit{slack bus}, which means that its power injection is chosen in such a way that the vector of actual power injections is a zero-sum vector as required in~\eqref{eq:zerosum}.

More specifically, we assume that the the vector of the first $n-1$ power injections $(p_1,\ldots,p_{n-1})$ follows a multivariate Gaussian distribution, with expected value $\mu \in \R^{n-1}$ and covariance matrix $\Sigma \in \R ^{(n-1) \times (n-1)}$. Since the covariance matrix $\Sigma$ is positive semi-definite, the matrix $\sqrt{\Sigma} \in \R^{ (n-1) \times (n-1)}$ is well defined via the Cholesky decomposition of $\Sigma$.  We are now able to formally define the vector $p$ of power injections as the $n$-dimensional random vector
\begin{equation}
\label{eq:powerinjections}
	p = S (\sqrt{\Sigma} X + \mu),
\end{equation}
where $X \sim \N_{n-1}(\bm{0}, I_{n-1})$ is a $(n-1)$-dimensional standard multivariate Gaussian random variable and $S$ is the matrix
\[
	S :=
	\left(
	\begin{array}{c}
		I_{n-1}\\
		- \mathbf{1}
	\end{array}
	\right) \in \R^{n \times (n-1)}.
\]
By construction we have $p=(p_1,\ldots,p_{n-1},- \sum_{i=1}^{n-1}p_i)$, so that~\eqref{eq:zerosum} is satisfied. Note that this formulation allows us to model \textit{deterministic power injections} as well, by means of choosing the corresponding variances and covariances equal to zero (or, from a practical standpoint, equal to very small positive numbers, so that the rank of~$\Sigma$ is not affected).

It is well known that an affine transformation of a multivariate Gaussian random variable is again a multivariate Gaussian random variable. Thus, identity~\eqref{eq:powerinjections} tells us that the power injections $p$ are indeed Gaussian, and hence, in view of~\eqref{eq:fCp}, so are the line power flows $f$. As it is convenient to look at the line power flows $f$ as an affine transformation of \textit{standard independent} Gaussian random variables, combining~\eqref{eq:fCp} and~\eqref{eq:powerinjections}, we can write
\begin{equation}
\label{eq:fXmu}
	f=V X + W \mu,
\end{equation}
where $V:= D C S \sqrt{\Sigma} \in \R^{m \times (n-1)}$ and $W:=D C S  \in \R^{m \times (n-1)}$. We denote by $\nu:=W \mu$ the vector of expected line power flows.

To summarize, we assume that the line power flows $f$ follow a multivariate Gaussian distribution $f\sim\N_{m}(\nu,VV^T)$, where the network topology and the correlation of the power injections are both encoded in the matrix $V$. Note in particular that $f_i \sim \N(\nu_i, \sigma_i^2)$, where the variance can be calculated as
\begin{equation}
\label{eq:defsigmai}
	\sigma_i^2:= \sum_{j=1}^n V_{i,j}^2.
\end{equation}

The main assumption behind our stochastic model is that the power injections are Gaussian. In~\cite[Section 1.5]{Bienstock2014} it is argued how this assumption, altough simplifying, is reasonable in order to model buses that house wind farms. Note that, compared to the power injections model in~\cite{Bienstock2014}, our formulation allows for general correlations between stochastic injections, as we do not impose any restrictions on the covariance matrix $\Sigma$. Section~\ref{sec5} contains a discussion to what extent our assumptions may be relaxed.

\subsection{Line failure probabilities}
\label{sub:problemstatement}
The main goal of the present paper is to understand how the probability of an overload violation depends on the parameters of the systems and characterize which average power injection vector $\mu$ will make such a probability smaller than a desired target.

In view of the definition of line overload given in Subsection~\ref{sub:network}, we define the \textit{line failure event} $\C$ as
$
	\C := \left \{ \exists \, \ell=1,\dots,m ~:~ |\tf_\ell| \geq M_\ell \right \}.
$
Leveraging the normalized line power flows that we introduced earlier, we can equivalently rewrite $\C$ as
\[
	\C= \left \{\max_i|f_i| \geq 1\right  \}.
\]
Given a power injection covariance matrix $\Sigma$, define the \textit{risk level} $r(\mu)$ associated with a power injection profile $\mu$ as
\[
	r(\mu):= \E \max_i |f_i|.
\]

Given a covariance matrix $\Sigma$, the risk level is a well-defined function $r: \R^{n-1} \to \R$ of the average injection vector $\mu$, since in view of~\eqref{eq:fXmu} we can rewrite $	r(\mu) = \E \max_i |V_i X + W_i \mu|$, where $V_i$ and $W_i$ denote the $i$-th row of the matrices $V$ and $W$, respectively.

We aim to characterize for a \textit{given} covariance matrix $\Sigma$ the average power injection vectors $\mu$ that make line failures a \textit{rare event}, say $\P(\C) \leq q$ for some very small threshold $q \in (0,1)$ to be set by the network operator (think of $q=10^{-5}$ or $q=10^{-6}$). In other words, given $q \in (0,1)$, we aim to determine the region $\Rtt_q \subset \R^{n-1}$ defined by
%
\[
	\Rtt_q:=\{\mu\in \R^{n-1} : \P(\C) \leq q\}.
\]
For every given $\mu \in \R^{n-1}$, calculating exactly the probability $\mathbb P_{\mu}(\C)$ means solving a high-dimensional integral that is also unavoidably error-prone, since the integrand becomes extremely small quickly (containing a multivariate Gaussian density). Hence, finding the region $\Rtt_q$ exactly is a very computationally expensive and error-prone task.

This is the main motivation of the present work, in which we develop analytic tools which are explicit enough to be useful for planning and control of power grids in the short-term. More specifically, in the next section we propose \textit{capacity regions} that can be calculated much faster and that can be used to approximate $\Rtt_q$.



\section{Main results}
\label{sec3}
This section is entirely devoted to the new three capacity regions $\Rbu_q, \Rbg_q$, and $\Rbt_q$ that we introduce to approximate $\Rtt_q$. We first introduce the probabilistic upper bounds on which our method is based in Subsection~\ref{sub31}, then formally define the regions $\Rbu_q, \Rbg_q$, and $\Rbt_q$ in Subsection~\ref{sub32} and lastly in Subsection~\ref{sub33} discuss the trade-offs between these different regions.

\subsection{Concentration inequalities}
\label{sub31}
Our methodology relies on a well-known \textit{concentration bound} for a function of Gaussian random variables.
Concentration bounds describe the likelihood of a function of many random variables to deviate from its expected value. In our context, we are interested in understanding how likely is the random variable $\max_i |f_i|$ to deviate from its expected value $r(\mu) =\E \max_i |f_i|$.

Many concentration bounds have been proved, see~\cite[Chapter 2]{Wainwright2015} for an overview. In our setting, we require Proposition~\ref{prop:borell}, which is presented and proved later in Section~\ref{sec6}. The next theorem presents an explicit upper bound for the line failure probability in terms of $r(\mu)=\E \max_i|f_i|$ and the variances $\sigma^2_1, \dots, \sigma^2_m$ of the line power flows that can be derived using the aforementioned concentration bound.

\begin{thm}[Upper bound for line failure probability]\label{thm:concentration}$ $\\
If $r(\mu)< 1$, then
\begin{equation}
\label{eq:bound_C}
	\P(\C) \leq \exp\Bigl(-\frac{(1-r(\mu))^2}{2 \max_i \sigma_i^2}\Bigr).
\end{equation}
\end{thm}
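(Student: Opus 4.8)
The plan is to realize $\max_i|f_i|$ as a Lipschitz function of the underlying standard Gaussian vector $X$ and then feed it into the Gaussian concentration inequality of Proposition~\ref{prop:borell}. Concretely, I would define $g:\R^{n-1}\to\R$ by $g(x):=\max_i|V_i x + W_i\mu|$, so that by~\eqref{eq:fXmu} we have $\C=\{g(X)\ge 1\}$ and, by the definition of the risk level, $\E g(X)=\rmu$.

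The first — and, I expect, the main — step is to bound the Lipschitz constant of $g$. For $x,y\in\R^{n-1}$, using the elementary inequalities $|\max_i a_i-\max_i b_i|\le\max_i|a_i-b_i|$ and $\bigl||u|-|v|\bigr|\le|u-v|$, followed by Cauchy--Schwarz applied row by row,
\begin{equation*}
|g(x)-g(y)|\le\max_i\bigl|V_i(x-y)\bigr|\le\Bigl(\max_i\|V_i\|_2\Bigr)\,\|x-y\|_2 .
\end{equation*}
Since $\|V_i\|_2^2=\sum_{j}V_{i,j}^2=\sigma_i^2$ by~\eqref{eq:defsigmai}, the function $g$ is Lipschitz with constant $L=\max_i\sigma_i$.

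It then remains to apply the concentration bound. Since $\rmu<1$ by hypothesis, I set $t:=1-\rmu>0$, and Proposition~\ref{prop:borell}, applied to the $L$-Lipschitz function $g$ of the standard Gaussian vector $X\sim\N_{n-1}(\bm 0,I_{n-1})$, gives
\begin{equation*}
\P(\C)=\P\bigl(g(X)\ge 1\bigr)=\P\bigl(g(X)\ge\E g(X)+t\bigr)\le\exp\Bigl(-\frac{t^2}{2L^2}\Bigr)=\exp\Bigl(-\frac{(1-\rmu)^2}{2\max_i\sigma_i^2}\Bigr),
\end{equation*}
which is exactly~\eqref{eq:bound_C}.

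I expect the only genuinely delicate point to be the Lipschitz estimate: one has to be careful that the absolute values inside the maximum do not inflate the constant (they do not, by the reverse triangle inequality), and that the hypothesis $\rmu<1$ is precisely what guarantees $t>0$, so that the one-sided tail bound of Proposition~\ref{prop:borell} is applicable in the direction we need. Everything else is a direct substitution.
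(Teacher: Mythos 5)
Your proof is correct and follows essentially the same route as the paper: the paper applies Proposition~\ref{prop:borell} directly to the Gaussian vector $f$ with deviation $s=1-r(\mu)$, and your Lipschitz computation for $g(x)=\max_i|V_ix+W_i\mu|$ is precisely the argument hidden inside the proof of that proposition. The only cosmetic slip is that Proposition~\ref{prop:borell} as stated concerns $\max_i|X_i|$ for a multivariate Gaussian vector rather than arbitrary Lipschitz functions, so the bound you invoke in the last step is really the underlying Gaussian concentration inequality \cite[Theorem 2.4]{Wainwright2015} -- equivalently, you could simply apply Proposition~\ref{prop:borell} to $f$ itself, exactly as the paper does.
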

Note that $\E \max_i |f_i|=r(\mu)>1$ is definitely not a desirable operational regime for the power grid, since line failures are not rare events anymore.

\subsection{Capacity regions}
\label{sub32}
Given $q \in (0,1)$,  region $\Rbt_q$ is defined as the region that consists of all average power injection vectors $\mu$ such that the upper bound for $\P(\C)$ given by the concentration inequality~\eqref{eq:bound_C} is smaller than or equal to $q$, i.e.
\[
	\Rbt_q := \left \{\mu\in \R^{n-1} ~:~ \exp\Bigl(-\frac{(1-r(\mu))^2}{2 \max_i \sigma_i^2}\Bigr) \leq q \right \},
\]
which can be rewritten as
\[
	\Rbt_q  = \left \{\mu\in \R^{n-1} ~:~ r(\mu) \leq 1-\max_i \sigma_i \sqrt{2\log q^{-1}} \right \}.
\]
Unfortunately, the exact calculation of $r(\mu)$ is computationally expensive, for the same reasons outlined at the end of Section~\ref{sec2}. Furthermore, we want to have a better analytic understanding of the dependency of $r(\mu)$ on the power injection averages $\mu$, on the network topology and on the variances $\sigma_i$, something that is hard to obtain from purely numerical procedure. Aiming to overcome these issues, we propose an explicit upper bound for $r(\mu)$, namely
\begin{equation}
\label{eq:rup}
	r(\mu) \leq r^{\mathrm{up}}(\mu):= \max_i |\nu_i| + \max_i \sigma_i \sqrt{2\log(2m)},
\end{equation}
where we recall that $\nu=W\mu$ is the vector of average line power flows. The bound in~\eqref{eq:rup} is proven in Lemma~\ref{lem:upperbound_rmu} and can be used to obtain the following sub-region of $\Rbt_q$
\[
	\Rbu_q:=\left \{\mu\in \R^{n-1} ~:~ r^{\mathrm{up}}(\mu) \leq   1-\max_i \sigma_i \sqrt{2\log q^{-1} } \right \},
\]
which can be rewritten explicitly as
\begin{align*}
		\Rbu_q=\Bigl\{\mu & \in \R^{n-1} ~:~ \max_i |\nu_i| \leq \\
	  & \leq 1-\max_i \sigma_i (\sqrt{2\log q^{-1}}+\sqrt{2\log(2m)}\Bigr\}.
\end{align*}
In terms of $\mu$ we see that $\Rbu_q$ is an intersection of half-spaces, and so $\Rbu_q$ is convex and polyhedral.
A refinement of our analysis (see Lemma~\ref{lem:upperbound_rmu}) shows that is possible to obtain a sharper upper bound $r^\star(\mu)$ for $r(\mu)$,
\[
	r(\mu) \leq r^\star(\mu) \leq r^{\mathrm{up}}(\mu),
\]
which results in the following region
\[
	\Rbg_q:=\left \{\mu\in \R^{n-1} ~:~ r^\star(\mu) \leq 1-\max_i \sigma_i\sqrt{2\log q^{-1}} \right \}.
\]
Unfortunately there is no analytic expression for $r^\star(\mu)$, but in Section~\ref{sec6} we show that calculating $r^\star(\mu)$ requires only the evaluation of a function in a finite number of points, making it a numerically viable approach, and the resulting capacity region remains convex and polyhedral. Summarizing, we have
\begin{thm}[Inclusions among capacity regions]\label{thm:regions}
Given $q \in (0,1)$, if $r(\mu)<1$, then the following inclusions hold:
\begin{equation}\label{eq:inclusion}
	\Rbu_q \subseteq \Rbg_q \subseteq \Rbt_q \subseteq \Rtt_q.
\end{equation}
\end{thm}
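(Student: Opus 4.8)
The plan is to establish the chain \eqref{eq:inclusion} one inclusion at a time, working from right to left, since each step is an immediate consequence of a result already available. Throughout I write $s:=\max_i\sigma_i$ for brevity, so that $s^2=\max_i\sigma_i^2$, and I note that for $q\in(0,1)$ we have $\log q^{-1}>0$.

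First I would handle $\Rbt_q\subseteq\Rtt_q$. Take any $\mu\in\Rbt_q$. Since $x\mapsto\exp(-x)$ is strictly decreasing, the defining inequality of $\Rbt_q$ is equivalent to $(1-r(\mu))^2\ge 2s^2\log q^{-1}$; because the standing hypothesis $r(\mu)<1$ guarantees $1-r(\mu)>0$, taking positive square roots gives $r(\mu)\le 1-s\sqrt{2\log q^{-1}}$, which is exactly the rewritten form of the membership condition for $\Rbt_q$ stated in Section~\ref{sub32}. Applying Theorem~\ref{thm:concentration} (whose hypothesis $r(\mu)<1$ is in force) then yields $\P(\C)\le\exp\bigl(-(1-r(\mu))^2/(2s^2)\bigr)\le q$, so $\mu\in\Rtt_q$.

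Next, the inclusions $\Rbg_q\subseteq\Rbt_q$ and $\Rbu_q\subseteq\Rbg_q$ both follow from the two-sided bound $r(\mu)\le r^\star(\mu)\le r^{\mathrm{up}}(\mu)$ established in Lemma~\ref{lem:upperbound_rmu}. Indeed, if $\mu\in\Rbg_q$ then $r^\star(\mu)\le 1-s\sqrt{2\log q^{-1}}$, and since $r(\mu)\le r^\star(\mu)$ we obtain $r(\mu)\le 1-s\sqrt{2\log q^{-1}}$, i.e. $\mu\in\Rbt_q$ in its rewritten form. The same argument with the inequality $r^\star(\mu)\le r^{\mathrm{up}}(\mu)$ shows that $\mu\in\Rbu_q$ implies $r^\star(\mu)\le 1-s\sqrt{2\log q^{-1}}$, hence $\mu\in\Rbg_q$. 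Chaining the three inclusions then gives \eqref{eq:inclusion}.

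There is essentially no hard step here: the statement is a bookkeeping consequence of Theorem~\ref{thm:concentration} and Lemma~\ref{lem:upperbound_rmu}. The only point requiring a moment's care is the algebraic equivalence used to rewrite the membership condition for $\Rbt_q$, where one must invoke $r(\mu)<1$ in order to pass from $(1-r(\mu))^2\ge 2s^2\log q^{-1}$ to $1-r(\mu)\ge s\sqrt{2\log q^{-1}}$ with the correct sign; this is precisely why the hypothesis $r(\mu)<1$ appears in the statement of the theorem (and, in typical non-degenerate settings where $s>0$, membership in $\Rbt_q$ already forces $r(\mu)<1$ on its own).
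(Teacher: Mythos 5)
Your proposal is correct and follows exactly the route the paper intends: the theorem is presented there as a summary whose rightmost inclusion is Theorem~\ref{thm:concentration} (using the hypothesis $r(\mu)<1$) and whose left two inclusions are the chain $r(\mu)\le r^\star(\mu)\le r^{\mathrm{up}}(\mu)$ from Lemma~\ref{lem:upperbound_rmu}. Your extra care about the sign when taking square roots in the rewritten condition for $\Rbt_q$ is a welcome but minor addition.
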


\subsection{Discussion}
\label{sub33}
We can guarantee that a line overload is a sufficiently rare event by enforcing that the risk level $r(\mu)$ is at most $1-\max_{i}\sigma_i\sqrt{2\log(1/q)}$. This approach has the merit to provide a capacity region $\Rbt_q$ that can be expressed as a simple linear condition on the risk level $r(\mu)$, but has the drawback that it requires the computation of $r(\mu)$, a non-trivial task.

The smaller region $\Rbu_q$, although more conservative, is expressed in closed-form and, moreover, its dependency on the parameters $\nu,\sigma$ and $m$ is made explicit. In particular, the maximum standard deviation of the power flows, i.e.~$\max_i \sigma_i$ plays a big role in defining the capacity regions: indeed to larger values of $\max_i \sigma_i$ correspond smaller regions, which is intuitive since a bigger variance results in a higher probability of overload. In between the two regions $\Rbu_q$ and $\Rbt_q$ lies the intermediate region $\Rbg_q$, which is less conservative that $\Rbu_q$ and can be computed very efficiently, even if it cannot be expressed in closed-form (see Section~\ref{sec6} for more details).

Both regions $\Rbu_q$ and $\Rbg_q$ seem sufficiently explicit to be used as probabilistic constraints into chance-constrained versions of OPF problems, as studied in~\cite{Bienstock2014,Summers2014}.

\section{Numerical case studies}
\label{sec4}
To illustrate how the three new regions compare to $\Rtt_q$, we consider first a very simple network with a circuit topology, consisting of $3$ buses, all connected with each other by $3$ identical lines of unit susceptance and capacity $M=5$. We take the power injections in the non-slack nodes to be independent, zero-mean Gaussian random variables with variance $\epsilon=0.5$, which correspond to taking $\mu=(0,0)$ and $\Sigma= \epsilon I_2$. The corresponding four safe capacity regions with $q=10^{-3}$ are plotted in Figure~\ref{fig:3bus}.

We then plot in Figure~\ref{fig:14bus} the two-dimensional capacity regions $\Rbu_q$ and $\Rbg_q$ for the IEEE 14-bus test network (representing a portion of the American Electric Power System~\cite{Christie2006}) corresponding to bus $6$ and $9$. We replace the deterministic power injections with Gaussian random variables with average $\mu$ equal to the original deterministic values and variance $\epsilon=2\cdot 10^{-2}$. The line capacities have been chosen to be equal to $1.5$ times the average line power flow $\nu=W\mu$, and we used $q=10^{-4}$. The data for $\mu$, line susceptances and network topology have been extracted from the MATPOWER package~\cite{Zimmerman2011}. The regions $\Rbt_q$ and $\Rtt_q$ have been omitted since the calculations were intrinsically computationally unstable, as argued at the end of Section~\ref{sec2}.
Note that our capacity regions are indeed convex, and polyhedral.

\vspace{-0.1cm}

\begin{figure}[!h]
	\centering
	\subfloat[3-bus cycle network]{\includegraphics[scale=0.202]{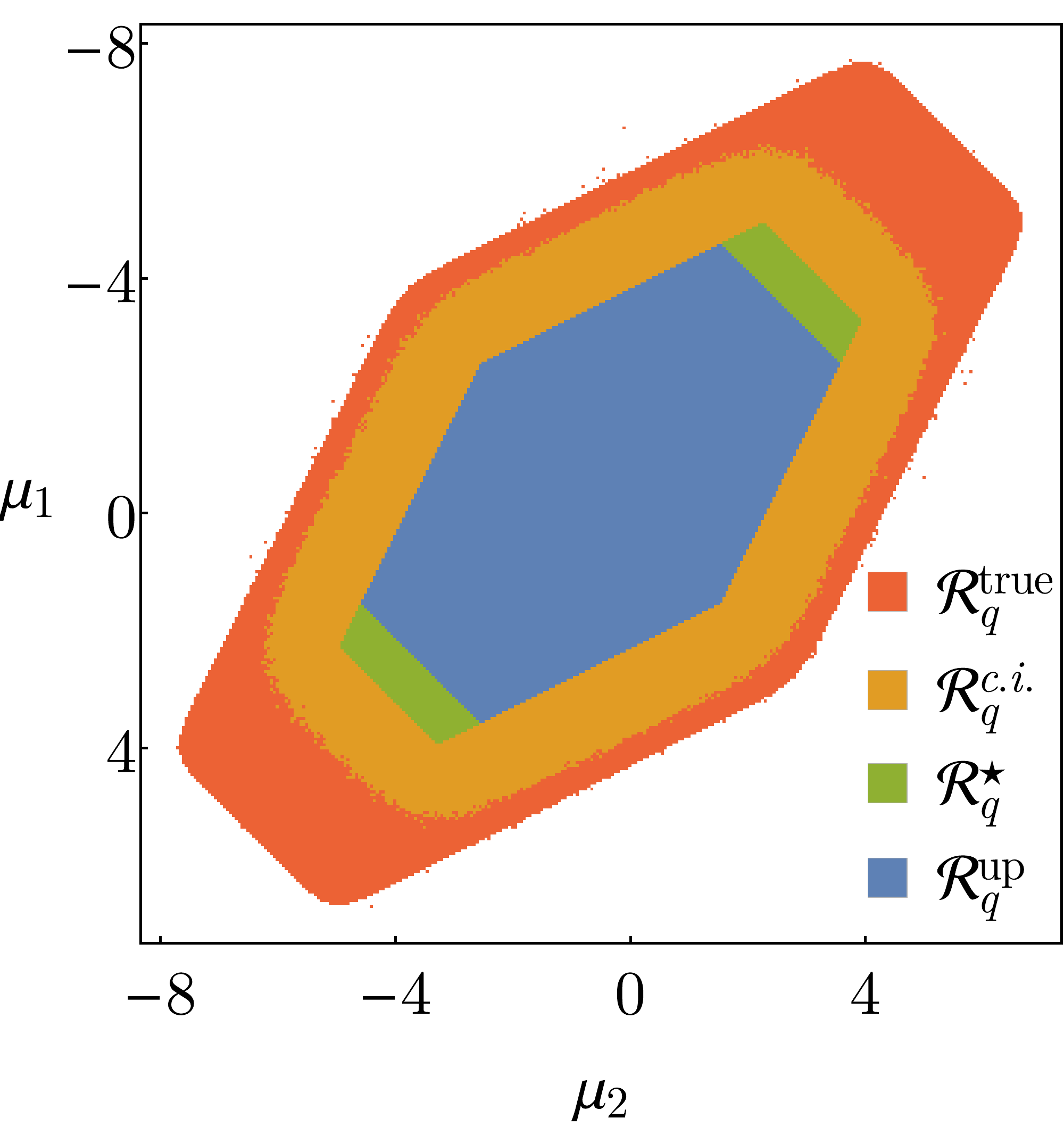} \label{fig:3bus}}
	\hspace{0.0cm}
	\subfloat[IEEE 14-bus network]{\includegraphics[scale=0.202]{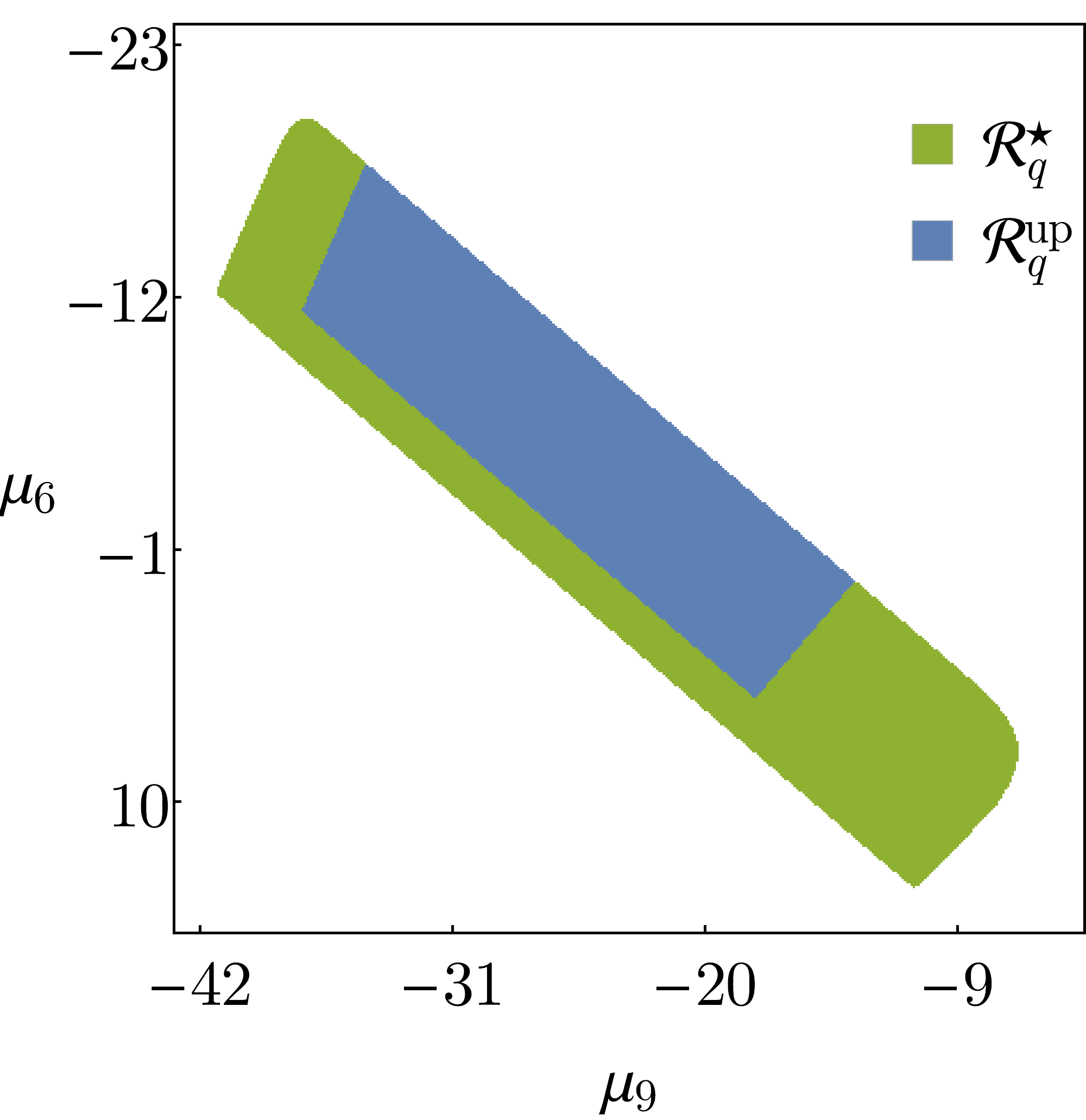} \label{fig:14bus}}
	\label{fig:regions}	
	\caption{Capacity regions comparison}
\end{figure}

\section{Mathematical tools}
\label{sec6}
\begin{prop}[Unilateral concentration inequality for the maximum of multivariate Gaussian random variables]\label{prop:borell}$ $\\
Let $X=(X_1,\ldots,X_k) \sim \N_{k}(\mu,\Sigma)$ be a multivariate Gaussian random variable, and let $\delta_i :=\sqrt{\Sigma_{i,i}}$ be the standard deviation of $X_i$, $i=1,\ldots,k$. The following concentration inequality holds for every $s\geq 0$:
\[
	\mathbb P \left ( \max_i |X_i| -\E \max_i |X_i| \geq s \right ) \leq 	\exp \left (-\frac{s^2}{2 \max_i \delta_i^2}\right).
\]
\end{prop}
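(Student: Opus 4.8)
The plan is to reduce the statement to the classical Gaussian concentration inequality (the Borell–TIS inequality, or equivalently Gaussian Lipschitz concentration), which states that if $Z\sim\N_k(0,I_k)$ and $g:\R^k\to\R$ is $L$-Lipschitz with respect to the Euclidean norm, then $\mathbb P(g(Z)-\E g(Z)\ge s)\le \exp(-s^2/(2L^2))$ for every $s\ge 0$. This is the "well-known concentration bound for a function of Gaussian random variables" alluded to in Section~\ref{sub31}, so I would cite it from, e.g., \cite[Chapter 2]{Wainwright2015} rather than reprove it.

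First I would write $X = \mu + \sqrt{\Sigma}\,Z$ with $Z\sim\N_k(0,I_k)$, so that $X$ is an affine image of a standard Gaussian vector. Define the function $g(z):=\max_i |\mu_i + (\sqrt{\Sigma}z)_i|$, which is exactly the random variable $\max_i|X_i|$ in disguise. The key step is to bound the Lipschitz constant of $g$. Since $z\mapsto \max_i|a_i + (\sqrt{\Sigma}z)_i|$ is a maximum of absolute values of affine functions, and $x\mapsto\max_i|x_i|=\|x\|_\infty$ is $1$-Lipschitz from $(\R^k,\|\cdot\|_2)$ to $\R$ (because $\|x\|_\infty\le\|x\|_2$), the composition with the linear map $z\mapsto \sqrt{\Sigma}z$ has Lipschitz constant at most the operator norm $\|\sqrt{\Sigma}\|_{2\to 2} = \sqrt{\|\Sigma\|_{2\to 2}}$. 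However, this would give $\max_i\delta_i^2$ replaced by the largest eigenvalue of $\Sigma$, which is weaker than claimed.

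The sharper route — and the step I expect to be the main obstacle — is to avoid passing through the operator norm and instead bound the Lipschitz constant of $g$ directly by $\max_i\delta_i$. The point is that $g$ is a pointwise maximum of the functions $g_i^{\pm}(z):=\pm(\mu_i+(\sqrt{\Sigma}z)_i)$, each of which is affine with gradient $\pm(\sqrt{\Sigma})_i^T$ (the $i$-th row of $\sqrt{\Sigma}$), whose Euclidean norm is $\sqrt{\sum_j (\sqrt{\Sigma})_{i,j}^2} = \sqrt{\Sigma_{i,i}} = \delta_i$. A maximum of Lipschitz functions is Lipschitz with constant the supremum of the individual constants, hence $g$ is Lipschitz with constant $\max_i\delta_i$. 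Plugging $L=\max_i\delta_i$ into the Borell–TIS inequality yields exactly
\[
	\mathbb P\left(\max_i|X_i|-\E\max_i|X_i|\ge s\right)\le \exp\left(-\frac{s^2}{2\max_i\delta_i^2}\right),
\]
as desired. I would state the "max of Lipschitz functions is Lipschitz" fact as a one-line lemma (it follows from $|\max_i a_i - \max_i b_i|\le\max_i|a_i-b_i|$), and note that $g$ is a.e.\ differentiable so the gradient-norm characterization of the Lipschitz constant applies; this is the only technical care needed.
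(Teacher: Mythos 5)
Your proof is correct and follows essentially the same route as the paper: write $X=\sqrt{\Sigma}Z+\mu$ with $Z\sim\N_k(0,I_k)$, apply Gaussian Lipschitz concentration (\cite[Theorem 2.4]{Wainwright2015}) to $h(z)=\max_i|(\sqrt{\Sigma})_i z+\mu_i|$, and bound its Lipschitz constant by $\max_i\delta_i$. The only difference is that you spell out the ``straightforward computation'' the paper leaves implicit (max of affine functions, row norms of $\sqrt{\Sigma}$ equal to $\delta_i$), and your remark that the naive operator-norm bound would only give the largest eigenvalue of $\Sigma$ correctly identifies why the row-wise argument is needed.
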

\begin{proof}
The multivariate Gaussian vector $X$ can be seen as an affine transformation $X=\sqrt{\Sigma}Z+\mu$ of a standard Gaussian vector $Z\sim \N_{k}(0,I_k)$. Then we apply\cite[Theorem 2.4]{Wainwright2015} to the random vector $Z$ choosing the function $h:\R^k\to\R$ that maps $Z$ into $h(Z) :=\max_{i=1,\ldots,k} |(\sqrt{\Sigma})_i Z+\mu_i|$. A straightforward computation shows that $h$ is a Lipschitz function with Lipschitz constant equal to $\max_{i=1,\ldots,k} \delta_i$.
\end{proof}

\begin{proof}[Proof of Theorem~\ref{thm:concentration}]
Write
\[
	\P(\C) =\P \Big (\max_i|f_i|-\E \max_i|f_i|\geq 1-\E \max_i|f_i|\Big ).
\]
Set $s:=1-\E \max_i|f_i| >0$ and apply Proposition~\ref{prop:borell} to $f$. Inequality~\eqref{eq:bound_C} follows as  the standard deviation of $f_i$ is equal to $\sigma_i$, in view of definition~\eqref{eq:defsigmai}.
\end{proof}

\begin{lem}[Upper bounds for the risk level]\label{lem:upperbound_rmu}
Let $r(\mu):=\E \max_i|f_i|$, and define
\[
	r^\star(\mu):=\inf_{s\in(0,+\infty)} \left\{ \frac{\log(2m)}{s}+ \max_{i=1,\dots,m} \left ( \frac{\sigma_i^2}{2} s + |\nu_i| \right ) \right \}.
\]
Then
\begin{equation}
\label{eq:rmubounds}
	r(\mu) \leq r^\star(\mu)\leq \max_i |\nu_i| + \max_i \sigma_i \sqrt{2 \log(2m)}.
\end{equation}
\end{lem}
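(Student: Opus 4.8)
The plan is to bound $r(\mu)=\E\max_i|f_i|$ by the classical Chernoff / \emph{soft-max} argument, using only that each marginal satisfies $f_i\sim\N(\nu_i,\sigma_i^2)$; note that no independence among the $f_i$ is needed.

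Fix $s\in(0,\infty)$. Since $x\mapsto\log x$ is concave, Jensen's inequality gives
\[
	r(\mu)=\frac1s\,\E\log e^{s\max_i|f_i|}\leq\frac1s\log\E e^{s\max_i|f_i|}.
\]
Now bound the maximum inside the expectation by a sum and then split the absolute value: $e^{s\max_i|f_i|}=\max_i e^{s|f_i|}\leq\sum_{i=1}^m e^{s|f_i|}\leq\sum_{i=1}^m\bigl(e^{sf_i}+e^{-sf_i}\bigr)$. Taking expectations and using the Gaussian moment generating function $\E e^{\pm sf_i}=e^{\pm s\nu_i+s^2\sigma_i^2/2}$, I obtain
\[
	\E e^{s\max_i|f_i|}\leq\sum_{i=1}^m e^{s^2\sigma_i^2/2}\bigl(e^{s\nu_i}+e^{-s\nu_i}\bigr)\leq 2m\,\exp\Bigl(\max_i\Bigl(\tfrac{s^2\sigma_i^2}{2}+s|\nu_i|\Bigr)\Bigr),
\]
where in the last step I used $e^{s\nu_i}+e^{-s\nu_i}\leq 2e^{s|\nu_i|}$ and then bounded the sum of $m$ nonnegative terms by $m$ times the largest. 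Substituting this into the Jensen bound and dividing the exponent by $s$ yields $r(\mu)\leq\frac{\log(2m)}{s}+\max_i\bigl(\tfrac{\sigma_i^2}{2}s+|\nu_i|\bigr)$; since $s\in(0,\infty)$ was arbitrary, taking the infimum gives the first inequality $r(\mu)\leq r^\star(\mu)$.

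For the second inequality I relax the inner maximum, $\max_i\bigl(\tfrac{\sigma_i^2}{2}s+|\nu_i|\bigr)\leq\tfrac{s}{2}\max_i\sigma_i^2+\max_i|\nu_i|$, so that
\[
	r^\star(\mu)\leq\max_i|\nu_i|+\inf_{s>0}\Bigl\{\frac{\log(2m)}{s}+\frac{\max_i\sigma_i^2}{2}\,s\Bigr\}.
\]
The remaining one-dimensional infimum of $a/s+bs/2$ over $s>0$ equals $\sqrt{2ab}$, attained at $s=\sqrt{2a/b}$; taking $a=\log(2m)$ and $b=\max_i\sigma_i^2$ this is precisely $\max_i\sigma_i\sqrt{2\log(2m)}$, which would complete the proof.

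I do not anticipate a serious obstacle: the only points needing care are the direction of Jensen's inequality (concavity of $\log$) and the passage from $|f_i|$ to the pair $e^{\pm sf_i}$, which is what produces the factor $2$ and hence the constant $2m$ inside the logarithm. It is worth emphasizing that correlations between the $f_i$ are irrelevant, because the bound $\max\leq\sum$ is applied before taking expectations, after which only linearity of expectation and the Gaussian moment generating function enter.
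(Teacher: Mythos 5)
Your proof is correct and follows essentially the same route as the paper: the paper's construction of the $2m$ variables $Y_j=\pm f_i$ is exactly your splitting $e^{s|f_i|}\leq e^{sf_i}+e^{-sf_i}$, and both arguments then combine Jensen's inequality, the Gaussian moment generating function, and the bound of a sum by $2m$ times its largest term, finishing the second inequality by relaxing to $\max_i\sigma_i^2$, $\max_i|\nu_i|$ and optimizing over $s$. No gaps; your explicit handling of the absolute values via $e^{s\nu_i}+e^{-s\nu_i}\leq 2e^{s|\nu_i|}$ is in fact slightly cleaner than the paper's statement $\hat\nu=\max_i\nu_i$, which should read $\max_i|\nu_i|$.
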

\begin{proof}
Take $2m$ random variables $Y_1,\dots,Y_{2m}$ defined as
\[
	Y_j:=
	\begin{cases}
		f_j & \text{ if } j=1,\dots,m,\\
		-f_{j-m} &\text{ if } j=m+1,\dots,2m.	
	\end{cases}
\]
From the definition of these random variables it immediately follows that $\max_{i=1,\dots,m} | f_i | = \max_{j=1,\dots,2m} Y_j$ and therefore $	\E \max_{i} | f_i | = \E \max_{j} Y_j$. Note that
\begin{equation}
	\lambda_j := \E Y_j=
	\begin{cases}
		\nu_j 			& \text{ if } j=1,\dots,m,\\
		-\nu_{j-m} 	&\text{ if } j=m+1,\dots,2m,		
	\end{cases}
\end{equation}
and $	\Var Y_j= \Var Y_{j+m} = \sigma_i^2$ for every $j =1, \dots,m$. For every $j=1,\dots,2m$ let $m_j(s):=\E \left ( \mathrm{e}^{s Y_j} \right ) = \mathrm{e}^{\sigma_j^2 s^2 /2 + \lambda_j s}$ be the moment generating function of the random variable $Y_j$. Following~\cite{Dasarathy2011}, for any $s \geq 0$ we have
\[
	\mathrm{e}^{s \, \E \max_j Y_j} \hspace{-0.075cm} \leq  \E ( \mathrm{e}^{s \max_j Y_j} ) = \hspace{-0.075cm} \sum_{j=1}^{2m} m_j(s) \leq 2m \max_j  \E ( \mathrm{e}^{s Y_j} ).
\]
Taking the $\log$ on both sides and rearranging we obtain
\begin{align*}
	&\E \max_j Y_j
	\leq \inf_{s \in (0, \infty)} \frac{1}{s} \log \left ( 2m \cdot \E \Big ( \max_{j=1,\dots,2m} \mathrm{e}^{s Y_j} \Big ) \right )\\
	& =\hspace{-0.075cm} \inf_{s \in (0, \infty)} \left \{ \frac{\log (2m) }{s} + \hspace{-0.075cm} \frac{1}{s} \log \left [ \max_{j=1,\dots,2m} \left (\mathrm{e}^{\sigma_j^2 s^2/2 + \lambda_j s} \right ) \right ] \right \},
\end{align*}
yielding the first bound, since the RHS is equal to $r^\star(\mu)$. If we now denote $\hat\nu := \max_{j=1,\dots,2m} \lambda_j = \max_i \nu_i$ and $\hat\sigma^2 = \max_i \sigma_i^2$, we have $m_j(s) \leq M(s)$ for all $s \geq 0$ and for every $j=1,\dots,2m$. Thus
\[
	\E \max_i Y_i \leq  \frac{\log (2m)}{s} + \frac{\hat \sigma^2}{2} s+ \hat{\nu}.
\]
Optimizing over $s$ in $(0,+\infty)$ and finding the optimum equals $s=\hat \sigma^{-1} \sqrt{2 \log(2m)}$,  we get
$
	\E \max_i Y_i \leq \hat{\nu} + \hat \sigma \sqrt{2 \log(2m)},
$
proving the other inequality in~\eqref{eq:rmubounds}.
\end{proof}

Lastly, we want to make some final remarks on how to calculate $r^\star(\mu)$ which is
 the infimum over $(0,\infty)$ of 
\[
	g(s):=\frac{\log(2m)}{s}+ \max_{i=1,\dots,m} \left  ( \frac{\sigma_i^2}{2} s + |\nu_i| \right ).
\]
This can be seen as the point-wise maximum of $m$ functions $g_k(s):=\frac{\log(2m)}{s}+  \frac{\sigma_k^2}{2} s + |\nu_k|$, $k=1,\dots, m$. Note that $r^\star(\mu)$ can be computed by evaluating of the function $g$ into at most $m + m(m-1)/2$ points and then take the minimum value: the candidate points are the $m$ local minima of the functions $g_1(s),\dots,g_m(s)$ (which are $s^\star_i := \sqrt{2 \log(2m)} /\sigma_i$, $i=1,\dots,m$), and the  points $s_{i,j} := 2(|\nu_i |- |\nu_j|)/(\sigma_j^2 - \sigma_i^2), \quad i,j=1,\dots,m, \, \, i \neq j,$
(if they exist and positive) of the lines $\frac{\sigma_i^2}{2} s + |\nu_i|$ and $\frac{\sigma_j^2}{2} s + |\nu_j|$ with , which are at most $m(m-1)/2$.
This analysis implies that the resulting capacity region is convex and polyhedral.

\section{Concluding remarks}
\label{sec5}
Probabilistic techniques, in particular powerful upper bounds for Gaussian random vectors, can be applied to generate explicit upper bounds for failure probabilities and corresponding safe capacity regions. The resulting regions are polyhedral, and can be characterized in such a way that they can be incorporated in optimization routines, such as OPF.
In an extended version of this paper we will show that our upper bounds give the correct asymptotic estimate of the failure probability in the small-noise large deviations regime as studied in \cite{Nesti2016}, i.e.\ our bounds are asymptotically sharp.
We will also extend the scope of our method as it is not limited to the assumptions in Section~\ref{sec2}: (i) the static analysis we consider can be extended to the dynamic situation as considered in~\cite{Nesti2016,Wadman2016}; (ii) the Gaussian assumption may be relaxed by the ideas in~\cite{Boucheron2013}; (iii)  other performance measures, like the probability that several lines fail, can be analyzed.

\bibliographystyle{IEEEtran}
\bibliography{library2}
\end{document}